\newtheorem{theorem}{Theorem}
\newtheorem{lemma}{Lemma}
\newtheorem{Example}{Example}
\theoremstyle{definition}
\def\delequal{\mathrel{\ensurestackMath{\stackon[1pt]{=}{\scriptstyle\Delta}}}}
\begin{document}

\title{A New Family of Perfect Polyphase Sequences with Low Cross-Correlation}



\author{   ~\IEEEmembership{}
\thanks{}
\thanks{}

}\author{%
  \IEEEauthorblockN{Dan Zhang}
  \IEEEauthorblockA{
  Norwegian university of science and technology\\
                    Email: dan.zhang@ntnu.no} 
                      \and
\IEEEauthorblockN{Staal Amund Vinterbo}
  \IEEEauthorblockA{Norwegian university of science and technology\\
                    Email: staal.vinterbo@ntnu.no}
}

\date{}
\maketitle

\begin{abstract}

Spread spectrum multiple access systems demand minimum possible cross-correlation between the sequences within a set of sequences having good auto-correlation properties. Through a connection between generalised Frank sequences and Florentine arrays, we present a family of perfect sequences with low cross-correlation having a larger family size, compared with previous works. In particular,  the family size can be equal to the square root of the period when the period of the perfect sequences is even. In contrast, the number of the perfect sequences of even period  with low cross-correlation is equal to one in all previous works.


\end{abstract}

\begin{keywords}
Perfect sequences, perfect auto-correlation, low cross-correlation, low correlation,  Florentine arrays, polyphase sequences.
\end{keywords}


\section{Introduction}

Sequences and their properties have been widely studied in  different research areas because many applications depend on their characteristics. 
Sequences with desirable correlation properties have been used in communication systems and radar systems for identification, synchronization, ranging, and interference mitigation \cite{Golomb2005SignalDF}.  
 In Code-Division Multiple-Access systems, low cross-correlation between the desired and interfering users is important to suppress multi-user interference. Good auto-correlation properties are important for reliable initial synchronization and separation of the multi-path components. Moreover, the number of available sequences should be sufficiently large so that it can accommodate enough users. Therefore, it is of great interest to design families of sequences with large family size and low correlation.

The periodic {\it cross-correlation} value of two complex sequences ${\bf u}=\{u(t)\}_{t=0}^{N-1}$ and  ${\bf v}=\{v(t)\}_{t=0}^{N-1}$ of period $N$ at
shift $\tau$ is defined as
\begin{eqnarray*}
R_{{\bf u}, {\bf v}}(\tau)=\sum_{t=0}^{N-1}u(t+\tau) v^*(t), ~~~~0\leq \tau
<N,
\end{eqnarray*}
where $N$ is a positive integer, $t+\tau$  is taken modulo $N$, and $v^*(t)$ is the
complex conjugate of  the complex number $v(t)$. When two sequences ${\bf u}$ and ${\bf v}$ are identical, the  periodic  cross-correlation function  is called {\it auto-correlation}	function, and is denoted by $R_{{\bf u}}(\tau)$. A sequence is said to be \textit{perfect}
if all the out-of-phase periodic auto-correlation
coefficients are zero, i.e., $R_{{\bf u}}(\tau)=0$ for  $\tau \not \equiv 0  \bmod N$.

Let  $\mathcal{S}$ be a set  of $M$ sequences of period $N$.  The maximum out-of-phase periodic auto-correlation magnitude is denoted by $R_{a}$ and defined by
$
R_{a} =  \max\{ \vert R_{{\bf s}_{i}}(\tau) \vert : {\bf s}_{i} \in \mathcal{S}, 0 < \tau < N \}.
$
The  maximum periodic cross-correlation magnitude is denoted by $R_{c}$ and defined by
$
R_{c} =  \max\{ \vert R_{{\bf s}_{i}, {\bf s}_{j}}(\tau) \vert : {\bf s}_{i} \neq {\bf s}_{j} \in \mathcal{S},  0 \leq \tau < N \}.
$  A lower bound on $R_{\max}= \max (R_{a}, R_{c})$  given by Welch \cite{Welch} is
$
R_{\max}\geq N\sqrt{\frac{M-1}{MN-1}}. 
$
Due to the above bound, it is of great interest to design  a sequence set with $ \sqrt{N}\leq R_{\max}\leq c\sqrt{N}$, where $c $ is a small constant and $N$ is the period of the sequences in the family. We call such a set \textit{a family of sequences with low correlation}.  Excellent surveys and fundamental discussions on this topic can be found~\cite{ HellesethKumar, Garg2009}.
 
We are particularly interested in families of perfect sequences with low correlation. Perfect sequences have ideal auto-correlation, i.e., $R_{a} =0$ in these families. Another bound called the Sarwate bound \cite{Sarwate} implies that $R_{c} \geq \sqrt{N}$. A set of perfect sequences meeting this bound is called \textit{an optimal set of perfect sequences}. Extensive research has been done on how to generate optimal families of perfect sequences \cite{Sarwate, Alltop, Alltop1, Popovic2, Suehiro, chirplike, Gabidulin1993, MSong, Mow, KPark, DanISIT2020,DanTIT22,SSong22}. In these works \cite{Sarwate, Alltop, Alltop1, Popovic2, Suehiro, chirplike, Gabidulin1993, MSong, Mow, KPark}, the number of perfect sequences with optimal cross-correlation is equal to $p-1$, where $p$ is the smallest prime divisor of  the period $N$.  Recent works \cite{DanISIT2020,DanTIT22,SSong22} show that the family size can be larger than $p-1$, and is determined by the existence of well-studied combinatorial objects, circular Florentine arrays. However, these constructions based on circular Florentine arrays that produce the desired large families can only do so for odd periods. When the period is even, the constructions yield families of size one.


In this paper, we propose a construction of perfect sequences with low correlation based on non-circular Florentine arrays. This construction allows us to derive a family of perfect sequences with $R_{c}=2\sqrt{N}$, where $N$ is the period of the sequences. 
The family size depends on the existence of Florentine arrays, which is greater than that in the previous works. In particular, the number of perfect sequences with low cross-correlation can be $\sqrt{N}$ for even $N$. Table~\ref{Listofwork} relates the above previous works to our results.

\section{Preliminaries}\label{sec-pre}

\subsection{Florentine arrays}
An $m \times n$  \textit{(circular) Tuscan-$k$  array }has $m$ rows and $n$ columns such that 1) each row is a permutation of
$n$ symbols and 2) for any two symbols $a$ and $b$, and for
each $t$ from $1$ to $k$, there is at most one row in which $b$ occurs  $t$ steps (circularly) to the right of $a$. In  particular, a (circular) Tuscan-$(n-1)$ array is referred to as a (circular) Florentine array. When $m=n$, we call them (circular) Tuscan squares and (circular) Florentine squares, respectively. 

 
For each positive integer $n \geq2$, 
 we denote $F(n)$ the maximum number such that an $F(n) \times n$ Florentine array exists and $F_{c}(n)$ the maximum number such that an $F_{c}(n) \times n$ circular Florentine array exists. By definition, $F(n) \geq F_{c}(n)$ for all $n$, because any circular Florentine arrays are also Florentine arrays.
 
\begin{lemma} \cite{HSong}  \label{CircularFB}
\begin{itemize}
\item[(1)] $F_{c}(n)=1$  when $n$ is even, and
\item[(2)] $p-1 \leq F_{c}(n) \leq n-1$, where $p$ is the smallest prime factor of $n$, and 
\item[(3)] $F_{c}(n) = n-1$  when $n$ is a prime.
\end{itemize}
\end{lemma}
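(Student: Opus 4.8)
The plan is to handle the three parts separately, getting (3) for free from (2) and concentrating the real work in the even-period statement (1). For the upper bound in (2), I would fix one arbitrary ordered pair of distinct symbols $(a,b)$ and note that in each row the symbol $b$ occurs at some cyclic distance $t\in\{1,\dots,n-1\}$ to the right of $a$. The circular Florentine (i.e.\ Tuscan-$(n-1)$) property forbids two rows from realizing the same value $t$ for this fixed pair, so the rows inject into the $n-1$ admissible distances, giving $F_{c}(n)\le n-1$.

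For the lower bound in (2) I would exhibit an explicit $(p-1)\times n$ array. Index the rows by $d\in\{1,\dots,p-1\}$ and take row $d$ to be the linear permutation $i\mapsto id\bmod n$. Since $p$ is the smallest prime factor of $n$, every $d<p$ is coprime to $n$, so each row really is a permutation. A short computation shows that the distance from symbol $a$ to symbol $b$ in row $d$ equals $(b-a)d^{-1}\bmod n$; hence two rows $d\ne d'$ realize the same distance for some pair $(a,b)$ iff there is a nonzero $c=b-a$ with $c\bigl(d^{-1}-(d')^{-1}\bigr)\equiv0\pmod n$, i.e.\ iff $d^{-1}-(d')^{-1}$ shares a nontrivial common factor with $n$. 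Writing $d^{-1}-(d')^{-1}=(d'-d)(dd')^{-1}$ and using $\gcd(d'-d,n)=1$ (because $0<|d'-d|<p$) shows this never happens, so the array is circular Florentine and $F_{c}(n)\ge p-1$. Part (3) is then immediate: for prime $n$ the smallest prime factor is $n$ itself, so $p-1=n-1$ already matches the upper bound, forcing equality.

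The crux is (1). Here I would restrict to two rows and encode the obstruction in a single invariant. Let $P_{1}(s),P_{2}(s)$ be the positions of symbol $s$ in the two rows. With only two rows the Florentine condition says precisely that for every pair $a\ne b$ the position-differences disagree, $P_{1}(b)-P_{1}(a)\not\equiv P_{2}(b)-P_{2}(a)\pmod n$, which rearranges to the assertion that $g(s):=P_{1}(s)-P_{2}(s)\bmod n$ is injective, hence a bijection onto $\Z_{n}$. Now I would evaluate $\sum_{s}g(s)$ in two ways: on one hand each $P_{i}$ is a bijection onto $\{0,\dots,n-1\}$, so the two position-sums cancel and $\sum_{s}g(s)\equiv0\pmod n$; on the other hand $g$ itself is a bijection onto $\Z_{n}$, so $\sum_{s}g(s)\equiv\frac{n(n-1)}{2}\pmod n$. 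These force $\frac{n(n-1)}{2}\equiv0\pmod n$, which fails for even $n$ since $\frac{n(n-1)}{2}\equiv\frac{n}{2}\not\equiv0\pmod n$. This contradiction rules out a second row, and since a single row trivially qualifies, $F_{c}(n)=1$.

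I expect the main obstacle to be spotting the right invariant in (1): the parity argument only becomes transparent after rewriting the pairwise distance condition as injectivity of the single map $g(s)=P_{1}(s)-P_{2}(s)$, and the contradiction hinges on computing $\sum_{s}g(s)$ two different ways. By comparison, the upper bound and the linear construction are essentially bookkeeping once the distance formula $(b-a)d^{-1}$ and the coprimality $\gcd(d'-d,n)=1$ are in hand.
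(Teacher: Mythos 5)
Your proof is correct in all three parts. Note, however, that the paper does not prove this lemma at all: it is imported by citation from \cite{HSong}, so there is no internal proof to compare against --- your write-up supplies arguments the paper leaves to the literature. For the record, your arguments are the standard ones and they hold up under scrutiny: the upper bound in (2) follows from the injection of rows into the $n-1$ possible circular distances for a single fixed ordered pair $(a,b)$; the lower bound follows from the linear rows $i\mapsto id \bmod n$ for $1\le d\le p-1$, where the key facts $\gcd(d,n)=1$ and $\gcd(d'-d,n)=1$ both come from $0<d,\,|d'-d|<p$ with $p$ the smallest prime factor of $n$; and (3) is the squeeze between the two bounds when $p=n$. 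Your treatment of (1) is the classical parity obstruction: for two rows the circular Florentine condition is equivalent to injectivity of $g(s)=P_{1}(s)-P_{2}(s)\bmod n$, and summing $g$ over all symbols gives $0\equiv n(n-1)/2\pmod n$, which fails exactly when $n$ is even since then $n(n-1)/2\equiv n/2\not\equiv 0\pmod n$. One small remark: this same invariant also shows why the obstruction disappears for odd $n$ (there $n(n-1)/2\equiv 0$), which is consistent with part (2) giving nontrivial families precisely when $n$ is odd; your proof thus explains the dichotomy that motivates the whole paper, namely why circular constructions collapse to size one for even periods.
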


\begin{lemma}\cite{TAYLOR1991}\label{FB}
\begin{itemize}
\item[(1)]$F(n)\leq n$, and
\item[(2)]$F(n) \geq F_{c}(n+1)$ for all $n$, and
\item[(3)] $F(n) \geq n-1$ and $F(n-1) = n-1$ when $n$ is a prime, and 
\item[(4)] $F(n) \geq p-1$ and $F(n-1) \geq p-1$, where $p$ is the smallest prime divisor of $n$.
\end{itemize}
\end{lemma}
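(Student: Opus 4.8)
The plan is to handle the four items separately, obtaining (2)--(4) from the circular bounds of Lemma~\ref{CircularFB} together with a single deletion construction, and isolating the upper bound (1) as the genuinely hard statement. For item (2) I would argue constructively. Starting from an $F_{c}(n+1) \times (n+1)$ circular Florentine array on the symbol set $\{0,1,\ldots,n\}$, in every row I locate the symbol $0$, delete it, and read off the remaining $n$ symbols beginning immediately after the position of $0$ (cyclically); this linearises each row into a permutation of $\{1,\ldots,n\}$ and yields an $F_{c}(n+1)\times n$ array. The point to verify is that the result is a (non-circular) Florentine array. Measuring linear positions relative to the cut, for two surviving symbols $a,b$ the linear right-distance $t$ from $a$ to $b$, when positive and at most $n-1$, is congruent modulo $n+1$ to the circular right-distance and lies in the range $\{1,\ldots,n\}$, hence coincides with it. Thus a repeated linear distance would force a repeated circular distance, contradicting the circular Florentine property; so $F(n)\geq F_{c}(n+1)$.

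Items (3) and (4) then follow by bookkeeping from the inequality $F(n)\geq F_{c}(n)$ (stated in the excerpt), item (2), and Lemma~\ref{CircularFB}. When $n$ is prime, Lemma~\ref{CircularFB}(3) gives $F_{c}(n)=n-1$, so $F(n)\geq F_{c}(n)=n-1$, while (2) gives $F(n-1)\geq F_{c}(n)=n-1$; combined with the upper bound (1) this pins $F(n-1)=n-1$. For item (4), Lemma~\ref{CircularFB}(2) gives $F_{c}(n)\geq p-1$, whence $F(n)\geq F_{c}(n)\geq p-1$ and, via (2), $F(n-1)\geq F_{c}(n)\geq p-1$.

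The main obstacle is item (1), the upper bound $F(n)\leq n$. A convenient reformulation regards each row as a bijection $P_{i}$ from the $n$ symbols to the positions $\{1,\ldots,n\}$; the Florentine condition is then equivalent to requiring that, for every pair of rows $i\neq j$, the difference map $x\mapsto P_{i}(x)-P_{j}(x)$ be injective (no two symbols share the same relative displacement between the two rows). The heart of the matter is bounding the number of such pairwise difference-injective permutations. The hard part is that the naive count, fixing one symbol $a$ and using that the symbols occurring at each fixed signed distance from $a$ are distinct across rows, only yields $F(n)\leq 2(n-1)$, because it ignores that each row is a complete permutation. Sharpening this to $F(n)\leq n$ requires exploiting the permutation structure, namely that the displacements within a single row fill an interval of $n$ consecutive integers; this refinement is the delicate combinatorial step established in \cite{TAYLOR1991}, and it is the place where I expect the real work to lie.
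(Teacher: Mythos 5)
Your treatment of items (2)--(4) is correct and is essentially the paper's own argument: the paper's note following the lemma justifies (2) by deleting a symbol from each row of a circular Florentine array on $n+1$ symbols, and then obtains (3) and (4) by exactly the bookkeeping you describe, using $F(n)\geq F_{c}(n)$, item (2), and Lemma~\ref{CircularFB}. Your verification that the deletion construction preserves the Florentine property (cutting each row at the deleted symbol and checking that surviving linear distances coincide with the original circular distances) is more careful than the paper's one-sentence remark, and it is sound.

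Where you go wrong is in your assessment of item (1). It is not the hard part, and the ``delicate combinatorial step'' you anticipate does not exist: $F(n)\leq n$ follows from a one-line pigeonhole count that uses only the $t=1$ (adjacency) condition. Each of the $m$ rows of a Florentine (indeed, merely Tuscan-$1$) array contains $n-1$ ordered pairs of adjacent symbols, and these pairs are distinct within a row since each symbol occurs once; by the defining property, no ordered pair $(a,b)$ can appear as an adjacent pair in two distinct rows; and there are only $n(n-1)$ ordered pairs of distinct symbols in total. Hence $m(n-1)\leq n(n-1)$, i.e., $m\leq n$. Your count --- fixing a single symbol $a$ and noting that the symbols at each signed distance from $a$ are distinct across rows --- yields only $2(n-1)$ precisely because it discards all but one symbol's worth of information; counting adjacencies globally uses each row's full permutation structure and closes the gap immediately. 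Deferring (1) to \cite{TAYLOR1991} is defensible, since the paper cites the entire lemma there, but the claim that the bound requires real work beyond the naive counting is false, and a complete proof should include the argument above.
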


Note that  the fact that $F(n) \geq F_{c}(n+1)$ for all $n$, is because any $F_{c}(n+1)$ circular Florentine rows on $n+1$ symbols can lead to the same number of rows on $n$ symbols by deleting any one symbol in each row.  With this fact and the lower bound on  $F_{c}(n)$, one can derive both $F(n) \geq p-1$ and $F(n-1) \geq p-1$, where $p$ is the smallest prime divisor of 
$n$. It follows that $F(n) \geq n-1$ and $F(n-1) = n-1$ when $n$ is a prime.

\begin{table}[!h]
\caption{ $6 \times 6$  and $6 \times 7$ Florentine arrays \cite{Colbourn}}

\parbox{.4\linewidth}{
\centering
\begin{tabular}{|c|c|c|c|c|c|}
\hline
$ 0$ & $2$ & $1$ & $4$ & $5$ & $3$  \\ \hline
$ 1$ & $3$ & $2$ & $5$ & $0$ & $4$  \\ \hline
$ 2$ & $4$ & $3$ & $0$ & $1$ & $5$   \\ \hline
$ 3$ & $5$ & $4$ & $1$ & $2$ & $0$   \\ \hline
$ 4$ & $0$ & $5$ & $2$ & $3$ & $1$   \\ \hline
$5$ & $1$ & $0$ & $3$ & $4$ & $2$   \\ \hline
\end{tabular}
\label{Florentineexample}
}	
\hfill
\parbox{.5\linewidth}{
\centering
\begin{tabular}{|c|c|c|c|c|c|c|}
\hline
$ 1$ & $2$ & $3$ & $4$ & $5$ & $6$ & $0$  \\ \hline
$ 2$ & $5$ & $0$ & $4$ & $3$ & $1$ & $6$  \\ \hline
$ 3$ & $0$ & $1$ & $4$ & $6$ & $5$ & $2$  \\ \hline
$ 5$ & $3$ & $6$ & $4$ & $0$ & $2$ & $1$  \\ \hline
$ 6$ & $1$ & $5$ & $4$ & $2$ & $0$ & $3$  \\ \hline
$0$ & $6$ & $2$ & $4$ & $1$ & $3$ & $5$  \\ \hline
\end{tabular}	
}
\end{table}

To achieve the upper bound on $F(n)$,  it will be interesting to know when a Florentine square exists. The only known Florentine squares of order $n$ are Vatican squares and come form the so-called prime construction which essentially is from the multiplication 
table $\bmod~(n+1)$, where $n+1$ is prime. 
Exhaustive search for Florentine arrays has been done by many researchers. 
Taylor \cite{TAYLOR1991} gave a table of all possible values of $F(n)$ for $1 \leq n \leq 32$, which was later updated by Hong Yeop Song \cite{HYSong} (See Table \ref{FAtable}). For more works on Tuscan arrays, see~\cite{GolombT, Colbourn}.

\begin{table}[!h] 
	\centering
	\caption{Possible values of  $F(n)$  \cite{Colbourn}}
	\begin{tabular}{| c | c | c | c | c | c |}
		\hline $n$  & $F(n)$ & $n$  & $F(n)$ & $n$ & $F(n)$  \\
		\hline $1$  & $1$ & $11$  & $10$ & $21$ & $7, \cdots, 21$  \\
		\hline $2$  & $2$ & $12$ & $12$ & $22$ & $22$  \\
		\hline $3$  & $2$ & $13$  & $12,13$ & $23$ & $22,23$  \\
		\hline $4$  & $4$ & $14$  & $7, \cdots, 14$ & $24$ & $6, \cdots, 24$  \\
		\hline $5$  & $4$ & $15$  & $7, \cdots, 15$ & $25$ & $6, \cdots, 25$  \\
		\hline $6$  & $6$ & $16$  & $16$ & $26$ & $6, \cdots, 26$  \\
		\hline $7$  & $6$ & $17$  & $16,17$ & $27$ & $6, \cdots, 27$  \\
		\hline $8$  & $7$ & $18$  & $18$ & $28$ & $28$  \\
		\hline $9$  & $8$ & $19$  & $18, 19$ & $29$ & $28, 29$  \\
		\hline $10$  & $10$ & $20$  & $6, \cdots, 20$ & $30$ & $30$  \\		
		\hline
	\end{tabular}
	\label{FAtable}
\end{table}

Let $C$ be an $m \times n$  Florentine array on $\mathbb{Z}_{n}$, where  $\mathbb{Z}_{n}$ denotes the ring of integers modulo $n$. The rows are indexed as $1$ to~$m$.  By definition,  each row is a permutation over $\mathbb{Z}_{n}$, denoted by $\beta_i$ for $1 \leq i \leq m$. These permutations have the following property.

\begin{lemma} \label{NumbofSolequation}
For $1 \leq {i}, {j} \leq m$ such that $i \neq j$ and $l \in \mathbb{Z}_{n}$, let
\begin{equation*} 
\mathcal{N}^l_{(i, j)}=\{t \in \mathbb{Z}_{n} ~| ~\beta_{i}(t) = \beta_{j}((t+l) \bmod n) \}.
\end{equation*} 
Then $|\mathcal{N}^l_{(i, j)}|\leq 2$ and the bound is tight.
\end{lemma}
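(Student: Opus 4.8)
The plan is to translate membership in $\mathcal{N}^l_{(i,j)}$ into a statement about the positions of a common symbol in rows $i$ and $j$, and then play this against the defining (non-circular) Florentine property of $C$. Concretely, fix $i\neq j$ and $l$. If $t\in\mathcal{N}^l_{(i,j)}$, then the symbol $a=\beta_i(t)=\beta_j((t+l)\bmod n)$ occupies column $t$ in row $i$ and column $(t+l)\bmod n$ in row $j$, i.e.\ $\beta_i^{-1}(a)=t$ and $\beta_j^{-1}(a)=(t+l)\bmod n$. Since $\beta_i$ is a permutation, distinct elements of $\mathcal{N}^l_{(i,j)}$ yield distinct symbols. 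Taking two distinct $t_1,t_2\in\mathcal{N}^l_{(i,j)}$ with symbols $a=\beta_i(t_1)$ and $b=\beta_i(t_2)$, I would read off the two rows the signed column distance from $a$ to $b$,
\[
\delta_i=\beta_i^{-1}(b)-\beta_i^{-1}(a)=t_2-t_1,\qquad \delta_j=\beta_j^{-1}(b)-\beta_j^{-1}(a)=\bigl((t_2+l)\bmod n\bigr)-\bigl((t_1+l)\bmod n\bigr),
\]
computed as integers with columns represented in $\{0,\dots,n-1\}$, so that $\delta_i,\delta_j\in\{-(n-1),\dots,n-1\}\setminus\{0\}$.

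Next I would exploit the non-circular Florentine property. Applying the Tuscan-$(n-1)$ condition to the ordered pairs $(a,b)$ and $(b,a)$ shows the signed column distance from $a$ to $b$ is realised in at most one row, so $i\mapsto\beta_i^{-1}(b)-\beta_i^{-1}(a)$ is injective and in particular $\delta_i\neq\delta_j$. On the other hand $\delta_i\equiv\delta_j\pmod n$ by construction, and both lie in $(-n,n)$, so $\delta_j-\delta_i\in\{-n,0,n\}$; together with $\delta_i\neq\delta_j$ this forces $\delta_j-\delta_i=\pm n$. Writing $g(t)=\bigl((t+l)\bmod n\bigr)-t$, which equals $l$ for $t<n-l$ and $l-n$ for $t\geq n-l$, one has $\delta_j-\delta_i=g(t_2)-g(t_1)$, and this is $\pm n$ precisely when $t_1$ and $t_2$ lie on opposite sides of the wrap threshold $n-l$, and $0$ otherwise. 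Hence any two distinct elements of $\mathcal{N}^l_{(i,j)}$ must lie in different parts of the partition $\{0,\dots,n-l-1\}\sqcup\{n-l,\dots,n-1\}$; since there are only two parts, $|\mathcal{N}^l_{(i,j)}|\leq 2$.

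For tightness it suffices to exhibit one configuration meeting the bound, and I would use the $6\times 6$ Florentine array of Table~\ref{Florentineexample}: with $i=1$, $j=2$ and shift $l=4$ one has $\beta_1(0)=\beta_2(4)=0$ and $\beta_1(2)=\beta_2(0)=1$, so $\mathcal{N}^4_{(1,2)}=\{0,2\}$ has exactly two elements, matching the upper bound just established. The main obstacle I anticipate is the sign and wrap-around bookkeeping of the middle step: one must use that the Florentine condition constrains \emph{ordered} pairs so that the integer (signed) differences, not merely their residues, are distinct across rows, and then observe that the reduction modulo $n$ can shift such a difference by exactly $n$ only when the two columns straddle the wrap point. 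This is precisely the feature distinguishing the non-circular case (bound $2$) from the circular case, where equality of residues alone already forbids a repeat and produces the bound $1$ recorded in Lemma~\ref{CircularFB}.
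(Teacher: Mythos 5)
Your proof is correct and takes essentially the same route as the paper's: both arguments hinge on showing that two solutions $t_1\neq t_2$ whose shifts $t_1+l$ and $t_2+l$ have the same wrap-around status would place an ordered pair of symbols at the same signed separation in two distinct rows (contradicting the Florentine property), and then conclude $|\mathcal{N}^l_{(i,j)}|\leq 2$ because the wrap indicator is two-valued. Your tightness witness $\mathcal{N}^4_{(1,2)}=\{0,2\}$ checks out and is a valid alternative to the paper's $\mathcal{N}^2_{(1,2)}=\{3,5\}$ from the same $6\times 6$ array.
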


\begin{proof} 
Let addition be in $\mathbb{Z}$ and let $\delta(x)=\mathds{1}(x\geq n)$ where $\mathds{1}$ is the indicator function. Then $\delta$ indicates whether argument $x$
 ``wraps around" modulo~$n$. 
 
 For any $l \in \mathbb{Z}_{n}$ and ${i} \neq {j}$, let
 $ t, t' \in \mathcal{N}^l_{(i, j)} $ and $ t \neq t'$. First we prove that $\delta(t + l) \neq \delta(t' + l)$.
Without loss of generality,  let $0 \leq t < t' < n$.
Since $t, t' \in \mathcal{N}^l_{(i, j)} $, we have 
\begin{align*}
     \beta_{i}(t) &= \beta_{j}((t + l) \bmod n), \; \text{and} \\
     \beta_{i}(t') &= \beta_{j}((t' + l) \bmod n).       
\end{align*}
We assume that  $\delta(t +l) = \delta(t' +l ) = c $. It follows that
\begin{align*}
&((t' + l) \bmod n) - ((t + l) \bmod n) \\ & = (t' + l - cn) - (t + l - cn) \\ &=  t' - t. 
\end{align*}
Then the pair $(\beta_{i}(t), \beta_{i}(t') )= (\beta_{j}((t + l) \bmod n), \beta_{j}((t' + l) \bmod n)) \delequal(a, b) $ with $b$
being the $(t'-t)$-th step to the right of $a$  appear at two different rows $i$ and $j$, which contradicts the definition of  Florentine arrays. Therefore, $\delta(t + l) \neq \delta(t' + l)$ for   $ t, t' \in \mathcal{N}^l_{(i, j)}. $

Now we show that $|\mathcal{N}^l_{(i, j)}|\leq 2$.
Assume on the contrary, 
there exist $t , t',  t'' \in \mathcal{N}^l_{(i, j)}$  with
 $0 \leq t < t' < t'' < n$.  Since $\delta$ is a two-valued function, at least two of the elements $\delta(t+l)$, $\delta(t' +l)$ and $\delta(t'' +l)$ must share the same value. This contradicts the fact that  $\delta(t + l) $ and $ \delta(t' + l)$ can not be the same for  any $ t, t' \in \mathcal{N}^l_{(i, j)}.$
Consequently, we have  $|\mathcal{N}^l_{(i, j)}|\leq 2$ for $ l \in \mathbb{Z}_{n}$ and $ {i} \neq {j}$. 

For the $6 \times 6$ Florentine array in Table \ref{Florentineexample}, $\mathcal{N}^2_{(1, 2)}=\{3, 5\}$,  demonstrating that the bound is tight. \qedhere
\end{proof}

\subsection{Perfect polyphase sequences}
A \textit{polyphase sequence} is a sequence whose elements are  all complex roots of unity of the form $exp(i2\pi x)$ where $x$ is a rational number and $i = \sqrt{-1}$.  Many studies have been done on the constructions of perfect polyphase sequences.  Mow \cite{Mow} classified all known perfect polyphase sequences into four classes:  generalised Frank sequences \cite{Kumar},  generalised chirp-like sequences \cite{chirp-like},  Milewski sequences \cite{Milewski}, and  perfect polyphase sequences associated with generalised bent functions \cite{Chung}. Mow also proposed a unified construction of perfect polyphase sequences and conjectured that the unified construction describes all the perfect polyphase sequences that exist.

Generalized Frank sequences are a class of perfect polyphase sequences which are from one-dimensional bent function and were proposed by Kumar, Scholtz and Welch~\cite{Kumar}. These sequences  were first discovered by Frank and Zadoff~\cite{Frank} in the case $\sigma = 0$ and $\pi$ being the identity permutation.  Heimiller \cite{Heimiller}  found the sequences
$
\omega_{N^2}^{N\cdot \pi(t_{1}) ( t_{2} + h(t_{1}))}
$
for the case of prime $N$, where $h$ is also an arbitrary function on $\mathbb{Z}_{N}$. Generalized Frank sequences are a more general family, and are defined as follows.

\begin{lemma} \cite{Kumar}\label{bent function}
Let $N$ be a positive integer and $\omega_N$ be a primitive  $N$-th root of unity. Let 
\begin{itemize}
\item[(i)] $\pi$ be a permutation of elements in  $\mathbb{Z}_{N}$ and let
\item[(ii)] $\sigma$ be an arbitrary function from  $\mathbb{Z}_{N}$ to  $\mathbb{Z}_{N^2}$.
\end{itemize}
Then $
s(t)= \omega_{N^2}^{N\cdot t_{2} \pi(t_{1}) + \sigma(t_{1})}
$
where $t=t_{1}+ N \cdot t_{2} $, $0 \leq t_{1}, t_{2} < N$, is a perfect sequence of period $N^2$.
\end{lemma}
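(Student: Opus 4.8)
The plan is to compute the periodic autocorrelation $R_s(\tau)=\sum_{t=0}^{N^2-1}s(t+\tau)\,s^*(t)$ directly and show it vanishes for every $\tau\not\equiv 0\pmod{N^2}$. The structural fact that drives everything is that $\omega_{N^2}^{N}$ is a primitive $N$-th root of unity, so we may take $\omega_{N^2}^{N}=\omega_N$ and factor the sequence as $s(t)=\omega_N^{t_2\pi(t_1)}\,\omega_{N^2}^{\sigma(t_1)}$. The ``bent'' factor $\omega_N^{t_2\pi(t_1)}$ is what produces the cancellation, while the phase $\omega_{N^2}^{\sigma(t_1)}$ depends only on $t_1$ and will turn out to be harmless.

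First I would introduce base-$N$ coordinates for the shift, writing $\tau=\tau_1+N\tau_2$ with $0\le\tau_1,\tau_2<N$, and work out the coordinates of $t+\tau\bmod N^2$. The delicate point is the carry in the first coordinate: setting $u_1=(t_1+\tau_1)\bmod N$ and $\epsilon=\mathds{1}(t_1+\tau_1\ge N)\in\{0,1\}$ (valid since $0\le t_1+\tau_1<2N$), the second coordinate of $t+\tau$ becomes $u_2=(t_2+\tau_2+\epsilon)\bmod N$. Substituting into $s(t+\tau)s^*(t)$ and using $\omega_{N^2}^{N}=\omega_N$ yields the product $\omega_N^{\,u_2\pi(u_1)-t_2\pi(t_1)}\cdot\omega_{N^2}^{\,\sigma(u_1)-\sigma(t_1)}$.

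Next I would split $R_s(\tau)=\sum_{t_1}\sum_{t_2}$ and evaluate the inner sum over $t_2$ with $t_1$ (hence $u_1$ and $\epsilon$) held fixed. Since $u_2\equiv t_2+\tau_2+\epsilon\pmod N$ and $\omega_N$ has order $N$, the inner sum collapses to $\omega_N^{(\tau_2+\epsilon)\pi(u_1)}\sum_{t_2}\omega_N^{\,t_2(\pi(u_1)-\pi(t_1))}$. This geometric sum equals $N$ when $\pi(u_1)\equiv\pi(t_1)\pmod N$ and $0$ otherwise; because $\pi$ is a permutation of $\mathbb{Z}_N$, this happens exactly when $u_1=t_1$, i.e. when $\tau_1=0$.

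The argument then finishes with two cases. If $\tau_1\neq 0$, then $u_1\neq t_1$ for every $t_1$, so each inner sum is $0$ and $R_s(\tau)=0$ at once; this is precisely the place where the carry $\epsilon$ and the $\sigma$-phase become irrelevant, which is why neither causes trouble. If $\tau_1=0$ (so $\tau_2\neq 0$, as $\tau\not\equiv 0$), then $u_1=t_1$, $\epsilon=0$, the $\sigma$-terms cancel, and $R_s(\tau)=N\sum_{t_1}\omega_N^{\,\tau_2\pi(t_1)}$; reindexing by the permutation $\pi$ gives $N\sum_{k=0}^{N-1}\omega_N^{\,\tau_2 k}=0$ since $\tau_2\not\equiv 0$. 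I expect the carry bookkeeping in the second coordinate to be the only genuinely fiddly step, and the payoff of the case split is that this carry never actually obstructs the cancellation.
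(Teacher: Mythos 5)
Your proof is correct: the carry bookkeeping ($u_1$, $\epsilon$), the collapse of the inner geometric sum over $t_2$, and the case split on $\tau_1$ all hold, and the $\sigma$-phase is handled properly (it is annihilated by the zero inner sum when $\tau_1\neq 0$ and cancels when $\tau_1=0$). The paper itself gives no proof of this lemma (it is cited from the literature), but your argument is precisely the autocorrelation analogue of the computation the paper uses for its main theorem --- the same base-$N$ decomposition of $t$ and $\tau$, the same carry indicator (the paper's $\delta_{t_1,\tau_1}$ is your $\epsilon$), and the same inner/outer sum factorization --- so it matches the paper's approach in all essentials.
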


By Lemma \ref{bent function}, there are in total $N!N^{2m}$ perfect sequences of period $N^2$. In order to generate an optimal set from these sequences,  the maximum cross-correlation magnitude of any two distinct sequences  should be  $N$. There exist many studies on perfect sequences with optimal cross-correlation (see Table~\ref{Listofwork}). However, these constructions are trivial when~$N$ is even, which means no pair of perfect sequences of even period with optimal cross-correlation has been reported. In next section, we present a family of perfect sequences of period~$N^2$ based on Lemma \ref{bent function}, whose maximum cross-correlation magnitude of any two distinct sequences is $2N$.  The number of sequences in this family can be $N$ when $N$ is even.

\section{Families of perfect sequences with low cross-correlation}

\begin{table*}[t]  \label{table1}
\centering
 \begin{threeparttable}
\caption{Families of perfect polyphase sequences with low cross-correlation}
\small{
 \setlength\tabcolsep{2pt}
\begin{tabular}{|c|c|c|c|c|c|c|c|c|c|c|c|c|}

\hline References & \cite{Mow} & \cite{DanTIT22} \cite{SSong22} & \cite{Sarwate}  \cite{Alltop} & \cite{Popovic2}& \cite{Gabidulin1993}& \cite{Alltop1} & \cite{Suehiro} &\cite{Gabidulin1993} & \cite{KPark} & \cite{MSong} & \cite{DanTIT22} \cite{SSong22} & this paper  \\

\hline   \makecell{Class of\\perfect sequences} &  \multicolumn{2}{c|}{Unified construction} &  \multicolumn{3}{c|}{\makecell{Generalised chirp-like\\ polyphase sequences}}   & \multicolumn{7}{c|}{Generalised Frank sequences}    \\
\hline   \makecell{Period of \\perfect sequences} & $rm^2$ & $rm^2$ $(r\neq 1)$ & $N$  & $rm^2$ & $P^{2h+1}$ &$Q^2$  & $P^2$ &$P^{2h}$  & $P^2$ & $N^2$  & $N^2$ & $N^2$\\
\hline \makecell{The family \\ size} & $p-1$ &$\min\{r^{*}-1, F_{c}(m)\}$ & $p-1$ & $p-1$  & $p-1$& $\frac{p-1}{2}$ & $p-1$ &$p-1$ & $p-1$ &  $p-1$ & $F_{c}(N)$& $F(N)$  \\

\hline
\end{tabular}}
\label{Listofwork}
 \begin{tablenotes}
   \item $N$,  $r$, $m$ and $h$ are positive integers;
 \item $P$ is an  odd prime;
     \item $Q$ is an  odd integer;
   \item $p$ is the smallest prime divisor of the period;
   \item $r^{*}$ is the smallest prime divisor of $r$;
   \item $F_{c}(m)$ is the maximum number such that an $F_{c}(m) \times m$ circular Florentine array exists.
      \item $F(N)$ is the maximum number such that an $F(N) \times N$  Florentine array exists;
    \end{tablenotes}
\end{threeparttable}
\end{table*}

In this section, we build a connection between generalised Frank sequences and  Florentine arrays, which allows us to generate a family of perfect sequences with  a large family size and low cross-correlation.

Let $N$ be a positive integer. Let $C$ be an $F(N) \times N$  Florentine array over $ \mathbb{Z}_{N}$,  where 
$F(N)$ is the maximum number such that an $F(N) \times N$ circular Florentine array exists.  Let $\mathcal{A}=\{\beta_{1}, \beta_{2},  \cdots \beta_{F(N)}\}$ be a set of permutations over $ \mathbb{Z}_{N}$ from the rows of $C$.   A set of sequences of period $N^2$ is defined as
\begin{equation} \label{Frankseq}
\mathcal{S} = \{  {\bf s}_{i} ~\vert~ {\bf s}_{i}(t) = \omega_{N^2}^{N \cdot \beta_i(t_{1}) t_{2} + \sigma(t_{1})}, \beta_i \in \mathcal{A} \},
\end{equation}
 where $t=t_{1}+t_{2}\cdot N$, $0 \leq t_{1}, t_{2}  < N$, and $\sigma$ is an arbitrary function from $\mathbb{Z}_{N}$ to  $\mathbb{Z}_{N^2}$.

\begin{theorem} \label{maintheorem} 
The  set $\mathcal{S}$  defined by \eqref{Frankseq} is a family of perfect sequences of size $F(N)$  with $R_{c} = 2N$.
\end{theorem}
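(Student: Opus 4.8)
The plan is to establish the three assertions separately: that each $\mathbf{s}_i$ is perfect, that the family has size exactly $F(N)$, and that $R_c=2N$. Perfectness is immediate from Lemma \ref{bent function}: for a fixed row $\beta_i$, the sequence $\mathbf{s}_i(t)=\omega_{N^2}^{N\beta_i(t_1)t_2+\sigma(t_1)}$ is precisely a generalised Frank sequence with permutation $\pi=\beta_i$, so $R_{\mathbf{s}_i}(\tau)=0$ for $\tau\not\equiv 0$ and hence $R_a=0$. The size equals $F(N)$ because distinct rows of a Florentine array are distinct permutations, and $\beta_i\neq\beta_j$ forces $\mathbf{s}_i\neq\mathbf{s}_j$: if $\beta_i(t_1)\neq\beta_j(t_1)$ then the two sequences already differ at the index $t=t_1+N$ (where $t_2=1$), since $N\beta_i(t_1)\not\equiv N\beta_j(t_1)\pmod{N^2}$. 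The whole substance of the theorem is therefore the cross-correlation computation.

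For the cross-correlation I would write both the running index and the shift in mixed-radix form, $t=t_1+Nt_2$ and $\tau=\tau_1+N\tau_2$ with $0\le t_1,t_2,\tau_1,\tau_2<N$, and carefully track the carry $c_1=\lfloor(t_1+\tau_1)/N\rfloor\in\{0,1\}$, so that $t+\tau$ has first coordinate $a_1=(t_1+\tau_1)\bmod N$ and second coordinate $(t_2+\tau_2+c_1)\bmod N$. Substituting into $R_{\mathbf{s}_i,\mathbf{s}_j}(\tau)=\sum_t \mathbf{s}_i(t+\tau)\mathbf{s}_j^*(t)$ and using $\omega_{N^2}^{N}=\omega_N$, the double sum factors into an outer sum over $t_1$, carrying the phase $\omega_{N^2}^{\sigma(a_1)-\sigma(t_1)}$ and a carry-dependent factor, multiplied by an inner geometric sum over $t_2$ of $\omega_N^{(\beta_i(a_1)-\beta_j(t_1))t_2}$. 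This inner sum equals $N$ when $\beta_i(a_1)=\beta_j(t_1)$ and vanishes otherwise.

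Next I would recognise the surviving outer terms as a Florentine collision set: the condition $\beta_j(t_1)=\beta_i((t_1+\tau_1)\bmod N)$ is exactly the statement $t_1\in\mathcal{N}^{\tau_1}_{(j,i)}$ from Lemma \ref{NumbofSolequation}. Since $i\neq j$, that lemma gives $|\mathcal{N}^{\tau_1}_{(j,i)}|\le 2$, so $R_{\mathbf{s}_i,\mathbf{s}_j}(\tau)$ is a sum of at most two terms, each of modulus $N$; the triangle inequality yields $|R_{\mathbf{s}_i,\mathbf{s}_j}(\tau)|\le 2N$, and together with $R_a=0$ this establishes $R_{\max}=R_c\le 2N$.

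The main obstacle is the matching lower bound $R_c\ge 2N$, i.e.\ showing the value $2N$ is attained. Here I would begin from a pair $(i,j)$ and a first-coordinate shift $\tau_1$ with $|\mathcal{N}^{\tau_1}_{(j,i)}|=2$, which exists by the tightness half of Lemma \ref{NumbofSolequation} (witnessed by $\mathcal{N}^2_{(1,2)}=\{3,5\}$ in Table \ref{Florentineexample}). The two surviving terms can be written as $N\,\omega_{N^2}^{e_1}$ and $N\,\omega_{N^2}^{e_2}$, where each exponent $e_k$ collects a $\sigma$-difference together with an $N$-scaled contribution $N\beta_i(a)(\tau_2+c)$; attaining modulus $2N$ amounts to solving the congruence $e_1\equiv e_2\pmod{N^2}$, which is linear in the still-free second-coordinate shift $\tau_2$ with coefficient $N(v-v')$ where $v=\beta_j(t_1)\neq v'=\beta_j(t_1')$ (distinct because $\beta_j$ is a permutation). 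The delicate step is that the two phasors rotate at different rates in $\tau_2$, so $\tau_2$ must be chosen to align them; this is where the carries $c_1,c_1'$ and the differences $\sigma(a_1)-\sigma(t_1)$, $\sigma(a_1')-\sigma(t_1')$ enter, and it is the part most sensitive to the particular $\sigma$. I expect this phase-alignment congruence, together with the clean bookkeeping of the carries in the factorization, to be the crux of the argument.
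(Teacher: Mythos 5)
Your upper-bound argument is, step for step, the paper's own proof: the same mixed-radix decomposition $t=t_1+Nt_2$, $\tau=\tau_1+N\tau_2$ with the carry $\delta_{t_1,\tau_1}$, the same factorization into an inner geometric sum $\sum_{t_2}\omega_N^{(\beta_i(t_1+\tau_1)-\beta_j(t_1))t_2}$ that survives only on the collision set of Lemma~\ref{NumbofSolequation}, and the same conclusion $\vert R_{\mathbf{s}_i,\mathbf{s}_j}(\tau)\vert\leq 2N$; perfectness via Lemma~\ref{bent function} is identical as well. (Your explicit check that distinct rows give distinct sequences is correct; the paper silently omits it.)

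The instructive difference lies in what you call the main obstacle. The paper's proof stops at the inequality $\vert R_{\mathbf{s}_i,\mathbf{s}_j}(\tau)\vert\leq 2N$: it contains no argument whatsoever that the value $2N$ is attained, even though the theorem asserts $R_c=2N$. So the lower bound you were struggling with is a gap in the paper itself, not a step you failed to reconstruct. Moreover, your instinct that attainment is sensitive to $\sigma$ is sound. Writing the two surviving terms as $N\omega_{N^2}^{e_1}$ and $N\omega_{N^2}^{e_2}$, alignment requires $e_1\equiv e_2 \pmod{N^2}$; the $\tau_2$-dependent part of $e_1-e_2$ is $N(v-v')\tau_2$, always a multiple of $N$, whereas the $\sigma$-contribution $\sigma(a_1)-\sigma(t_1)-\sigma(a_1')+\sigma(t_1')$ is an arbitrary element of $\mathbb{Z}_{N^2}$. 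If $\sigma$ is chosen so that this combination is not divisible by $N$ at the colliding pairs, no choice of $\tau_2$ aligns the two phasors, and the correlation stays strictly below $2N$ at those shifts; even for $\sigma=0$ the alignment congruence $(v-v')\tau_2\equiv -v\delta_1+v'\delta_1'\pmod N$ need not be solvable. Hence, for arbitrary $\sigma$, the statement that the paper's method (and yours) actually proves is $R_c\leq 2N$; exact equality is only verified in the paper's $N=6$, $\sigma=0$ example, numerically. You should not feel obliged to produce the phase-alignment argument: it is not in the paper, and in the stated generality it cannot be, since the claim of equality is false for some choices of $\sigma$.
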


\begin{proof}
Since each $ \beta_i \in \mathcal{A} $ is a permutation over $\mathbb{Z}_{N}$, each sequence in $\mathcal{S}$ is perfect by Lemma \ref{bent function}. 
For any shift $0\leq \tau< N^2$, we rewrite $\tau=\tau_1+\tau_2\cdot N$, where $0 \leq \tau_1, \tau_2 < N$, and
define
\begin{eqnarray*}\label{eqn_delta}
\delta_{t_{1}, \tau_1}=\left\{ \begin{array}{ll}
0& \mbox{~if~} t_1+\tau_1<N, \\
1&\mbox{~if~} t_1+\tau_1 \geq N.
\end{array}
\right.
\end{eqnarray*}

 Let ${\bf s}_{i}$ and ${\bf s}_{j}$ be two  sequences in $\mathcal{S}$, where  $1 \leq i \neq j \leq F(N)$. The cross-correlation between ${\bf s}_{i}$ and ${\bf s}_{j}$ is given by
\begin{equation*} \label{the-main-proof}
\begin{aligned}
R_{{\bf s}_{i},{\bf s}_{j}}(\tau) = & \sum_{t=0}^{N^2-1}s_{i}(t+\tau){s_{j}}^*(t)\\
 = &\sum\limits_{t_{2}=0}^{N-1} \sum\limits_{t_{1}=0}^{N-1}
               \omega_{N^2}^{N \cdot \beta_{i}(t_{1}+\tau_1)(t_{2}+\tau_2+\delta_{t_{1}, \tau_1}) + \sigma( t_{1}+\tau_1) } \\    
              & \cdot \omega_{N^2}^{- (N \cdot \beta_{j}(t_{1}) t_{2} + \sigma(t_{1}))}\\
                =  &\sum\limits_{t_{1}=0}^{N-1} \omega_{N^2}^{ N \cdot \beta_{i}(t_{1}+\tau_1)(\tau_2+\delta_{t_{1}, \tau_1})+ \sigma(t_{1}+\tau_1)-\sigma(t_{1})}   \\
  & \cdot \sum\limits_{t_{2}=0}^{N-1} \omega_{N}^{(\beta_{i}( t_{1}+\tau_1) -\beta_{j}(t_{1})) t_{2}}.
\end{aligned}
\end{equation*}
The inner sum of the last identity above is zero unless
\begin{eqnarray*}\label{pifunction}
\beta_{i}(t_{1}+\tau_1) \equiv \beta_{j}(t_{1}) ~ \bmod ~N.
\end{eqnarray*}
Since $\beta_{i}$ and $\beta_{j}$  are two rows from a Florentine array, the above equation has at most two solutions in
$\mathbb{Z}_{N}$  for   $\forall \tau_1 \in \mathbb{Z}_{N}$ and $ i \neq j$   by Lemma \ref{NumbofSolequation}. Therefore, we have $\vert R_{{\bf s}_{i},{\bf s}_{j}}(\tau)  \vert \leq 2N$ for all $0 \leq  \tau < N^2-1$ and  $ i \neq j $.
\end{proof}

 \begin{Example}
Let $N=6$ and  a $6 \times 6$  Florentine array is provided in Table \ref{Florentineexample}. Let  $\mathcal{A} =\{\beta_{1}, \beta_{2}, \cdots, \beta_{6}\}$ denote the set of permutations from the rows of the  Florentine array. For simplicity, let $\sigma =0$. Then a set of sequences of period $225$ is defined as
\begin{equation*} 
\mathcal{S} = \{  {\bf s}_{i} ~\vert~ {\bf s}_{i}(t) = \omega_{15}^{ \pi_i(t_{1}) t_{2} }, 1 \leq i \leq 6\},
\end{equation*}
 where $t=t_{1}+t_{2}\cdot 6$, $0 \leq t_{1}, t_{2}  < 6$, $\pi_i \in \mathcal{A}$ for $1 \leq i \leq 6$.
It is  verifiable that
\begin{itemize}
\item each sequence is a perfect sequence of period $36$;
and
\item $\vert R_{{\bf s}_{i},{\bf s}_{j}}(\tau) \vert \leq 12 $ for any  $~0\leq \tau< 6$,
 $1\leq i \neq j \leq 6$.
\end{itemize}
Therefore, the set $\mathcal{S}$ is a family of $6$ perfect sequences of period $36$ with $R_{c} = 12$, which are consistent with Theorem \ref{maintheorem}.
\end{Example}
  
Given an $F(N) \times N$  Florentine array, we can get a family of  $F(N)$  generalised Frank sequences of period $N^2$, where $N$ is a positive integer and $F(N)$ is the maximum number such that an $F(N) \times N$  Florentine array exists. Table \ref{Listofwork} gives a list of known results. Note that  $R_{c}$ in all the other works  is equal to the square root of the period, which means optimal cross-correlation. However, the family size in the previous works is either determined by the smallest prime divisor of the period or the existence of circular Florentine arrays. The properties of Florentine arrays in Lemma \ref{FB} implies  that the family size is larger in this paper. Furthermore, the number of rows  in a Florentine array for even  $N$, can be equal to $N$ (see Table \ref{FAtable}), which allows us to derive perfect sequences with low cross-correlation with family size~$N$.
In contrast,  the family size in all the other works is equal to one when the period of the sequences is even.


\section{Conclusion}

We derived a family of perfect sequences with low  cross-correlation based on Florentine arrays. The number of  the perfect sequences depends on the existence of  Florentine arrays. 
The properties of Florentine arrays assure that the family size is  larger than that in the previous works. 
The  previous constructions are trivial when the period of the perfect sequences is even. In this work, a small compromise on the optimality of the cross-correlation allows us to derive an non-trivial construction of perfect sequences with low cross-correlation for even period.


\section*{Acknowledgment}

This work was  supported in part by Innlandet Fylkeskommune.

\bibliographystyle{IEEEtran}
\bibliography{references}
\begin{IEEEbiographynophoto}{Dan Zhang}
received the B.S. and M.S. degrees in mathematics from Henan University,
Kaifeng, China, in 2011 and 2014 respectively.
From Sept. 2012 to June 2014, she was a visiting student in the Academy of Mathematics and System Science, Chinese Academy of Sciences, China. She is currently a Ph.D. student in the Department of Informatics at the University of Bergen, Norway.
Her research interests lie in sequence design, cryptography, and coding theory.
\end{IEEEbiographynophoto}

\end{document}